  \newcommand{\whenbook}[1]{%
    #1%
  }%
\providecommand{\whenbook}[1]{}%
\NewDocumentEnvironment{When}{m +b}{%
  \providebool{#1}%
  \ifbool{#1}{#2}{}%
}{}%
\NewDocumentEnvironment{Unless}{m +b}{%
  \providebool{#1}%
  \ifbool{#1}{}{#2}%
}{}%
  \newcommand{\whendraft}[1]{#1}%
    \newcommand{\unlessdraft}[1]{}%
\NewDocumentEnvironment{Draft}{+b}{%
  \whendraft{#1}%
}{%
}%
\NewDocumentEnvironment{DRAFT}{O{BlueViolet}}{%
  \begin{Draft}%
    \color{#1}%
  }{%
  \end{Draft}
}%
\crefname{invariantsi}{invariant}{invariants}
\NewDocumentCommand{\undernote}{s O{blue} m m}{
  \IfBooleanT{#1}{\smash}%
  {\color{#2} %
    \underbrace{\normalcolor%
      #4}_{\mathclap{\text{#3}}} %
  }%
  \IfBooleanT{#1}{\vphantom{#4}}
}
\newcommand{\defterm}[1]{{\boldmath\normalfont \bfseries #1}}%
\renewcommand{\defterm}{\emph}%
\g@addto@macro\bfseries{\boldmath}
\titlespacing*{\paragraph}{%
  0pt}{%              left margin
  {\medskipamount}}{% space before (vertical)
  1em}%               space after (horizontal)
\titleformat{\subparagraph}[runin]{\itshape}{0pt}{}{}%
\titlespacing*{\subparagraph}{%
  0pt}{%              left margin
  {\medskipamount}}{% space before (vertical)
  1em}%               space after (horizontal)
\newcommand{\NameColorComment}[3]{%
  \whendraft%
  {\renewcommand\thefootnote{\textcolor{#2}{\arabic{footnote}}}%
    \footnote{\color{#2}#1: #3}%
  }%
}%
\newlength{\myalphabet}                %
\newlength{\mywidth}                   %
\newlength{\mymargin}                  %
\providecommand{\wrt}{with respect to\xspace}%
\definecolor{calypso}{RGB}{50, 104, 145} %
\definecolor{almostblack}{RGB}{18, 18, 18} %
\begin{document}

\newcommand{\W}{\mathcal{W}}
\newcommand{\potential}{\fparnew{\phi}} %
\providecommand{\pot}{\potential}       %
\providecommand{\pote}{\potential}
\providecommand{\len}{\fparnew{\ell}}       %
\newcommand{\potlen}{\fparnew{\ell_{\potential}}} %
\newcommand{\plen}{\fparnew{\ell_{\potential}}}   %
\newcommand{\p}{\fparnew{\varphi}}                %
\newcommand{\lenp}{\fparnew{\ell_{\p}}}

\newcommand{\through}{\fparnew{\operatorname{T}}} %
\newcommand{\sandwich}{\fparnew{\operatorname{S}}} %
\newcommand{\disp}{\fparnew{d_{\p}}}                 %
\newcommand{\dis}{\fparnew{d}}                     %
\newcommand{\hopd}[1]{\fparnew{d^{#1}}}            %
\newcommand{\hopdp}[2][\p]{\fparnew{d^{#2}_{#1}}}            %
\newcommand{\shopd}[1]{\fparnew{\hat{d}^{#1}}}               %
\newcommand{\shopdp}[2][\p]{\fparnew{\hat{d}_{#1}^{#2}}}
\providecommand{\deg}{\fparnew{\operatorname{deg}}} %
\providecommand{\indeg}{\fparnew{\operatorname{deg}^-}} %
\providecommand{\outdeg}{\fparnew{\operatorname{deg}^+}} %
\providecommand{\hopindeg}[1]{\fparnew{\operatorname{deg}_{-}^{#1}}} %
\providecommand{\hopoutdeg}[1]{\fparnew{\operatorname{deg}_{-}^{#1}}} %
\providecommand{\heavyin}[1]{H^{-}_{#1}} %
\providecommand{\hin}{\heavyin} %
\providecommand{\heavyout}[1]{H^{+}_{#1}} %
\providecommand{\hout}{\heavyout}
\providecommand{\reachin}[1]{R^{-}_{#1}} %
\providecommand{\reachout}[1]{R^+_{#1}}  %
\providecommand{\apxheavy}{\smash{\tilde{H}}} %
\providecommand{\negV}{N}                     %
\providecommand{\negE}{E^-}

\renewcommand{\subsection}[1]{\paragraph{#1.}}

% Feel free to change your colors! https://en.wikibooks.org/wiki/LaTeX/Colors
\newcommand{\yufan}{\NameColorComment{Yufan}{Maroon}}
\newcommand{\kent}{\NameColorComment{Kent}{Cerulean}}
\newcommand{\peter}{\NameColorComment{Peter}{PineGreen}}

\newcommand{\authornote}{\texttt{$\setof{\texttt{huan1754},\texttt{jin453},\texttt{krq}}$@purdue.edu}. Purdue
  University, West Lafayette, Indiana. YH was supported in part by NSF
  grant IIS-2007481. PJ and KQ were supported in part by NSF grant
  CCF-2129816.}

\author{Yufan Huang \and Peter Jin \and Kent Quanrud}

% Comment out authors when submitting.

\title{Faster single-source shortest paths with
  negative~real~weights~via~proper~hop~distance\footnote{\authornote}}

\maketitle

\begin{abstract}
  The textbook algorithm for single-source shortest paths with
  real-valued edge weights runs in $\bigO{m n}$ time on a graph with
  $m$ edges and $n$ vertices. A recent breakthrough algorithm by
  \citet{Fineman24} takes $\apxO{m n^{8/9}}$ randomized time. We
  present an $\apxO{m n^{4/5}}$ randomized time algorithm building on
  ideas from \cite{Fineman24}.
\end{abstract}

\section{Introduction}

Let $G = (V,E)$ be a directed graph with $m$ edges, $n$ vertices, and
real-valued edge lengths $\len: E \to \reals$. For a fixed source
vertex $s$, the single-source shortest path (SSSP) problem is to
compute either a negative length cycle in $G$, or the shortest-path
distance from $s$ to every other vertex in $G$. The textbook
$\bigO{m n}$ time dynamic programming algorithm; discovered in the
1950's by \citet{Shimbel55}, \citet{Ford56}, \citet{Bellman58}, and
\citet{Moore59}; is a staple of introductory algorithms courses. Until
very recently, it was also the fastest algorithm for this problem.

There are much faster algorithms for important special cases. When the
edge weights are nonnegative, Dijkstra's algorithm takes
$\bigO{m + n \log n}$ time \cite{Dijkstra59,FT87}.  There is also a
long line of research when the edge weights are integer
\cite{GT89,Goldberg95,Sankowski05,YZ05,CMSV17,BLN+20,AMV20,CKLPGS22,BNW22,BCF23},
leading to a remarkable nearly linear running time in
\cite{BNW22}. These algorithms all have at least a logarithmic
dependence on the magnitude of the most negative edge weight.

The textbook bound of $\bigO{mn}$ held its ground until a recent and
exciting breakthrough by \citet{Fineman24}, who gave an
$\apxO{m n^{8/9}}$ randomized time algorithm in the Real RAM
model. ($\apxO{\cdots}$ hides $\log{n}$ factors.) Many of the
interesting ideas in this work are detailed below in
\Refsection{preliminaries}.

This work was inspired by \cite{Fineman24}. Building on Fineman's
ideas, we obtain:
\begin{restatable}{theorem}{TheTheorem}
  There is a (Las Vegas) randomized algorithm solving the SSSP problem
  for real-weighted graphs that runs in
  \begin{math}
    \bigO{m n^{4/5} \log^{2/5} n + n^{9/5} \log^{7/5} n}
  \end{math}
  randomized time with high probability.
\end{restatable}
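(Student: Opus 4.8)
The plan is to reduce the problem to constructing a single \emph{price function} (feasible potential) $\pot\colon V \to \reals$ under which the reweighted graph $G_{\pot}$---with each edge $(u,v)$ re-lengthed to $\len(u,v) + \pot(u) - \pot(v)$---has the property that every vertex $v$ admits a shortest $s$-to-$v$ walk using at most $\apxO{1}$ negative edges. Given such a $\pot$, all distances follow in $\bigO{m}$ further time by the standard hop-bounded dynamic program: alternate one global relaxation of the (now few) $G_{\pot}$-negative edges with one round of Dijkstra over the nonnegative edges, $\apxO{1}$ times; if this has not stabilized once the hop budget is exhausted, report a negative cycle. This is where the Las~Vegas guarantee lives---the output is always correct, and only the running time is randomized---so the entire theorem reduces to building such a $\pot$ within the stated time bound.

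We build $\pot$ by a short sequence of \emph{hop-reduction} rounds. Writing $\hopd{\eta}(u,v)$ for the least length of a $u$-$v$ walk using at most $\eta$ negative edges, we maintain a price function that is \emph{$\eta$-valid}, meaning $d_{\pot}(s,v) = \hopd{\eta}(s,v)$ in $G_{\pot}$ for every $v$; the zero function is trivially $n$-valid, since absent a negative cycle a shortest walk is simple and uses fewer than $n$ negative edges. Each round turns an $\eta$-valid price function into an $(\eta/2)$-valid one, or exposes a negative cycle when no consistent refinement exists, so after $\apxO{\log n}$ rounds we reach the target $\apxO{1}$-valid $\pot$. The engine of a round is a random-hub (hitting-set) argument: sample a set $R \subseteq V$ uniformly at random, and call $v$ \emph{heavy} when more than a threshold $\kappa$ of vertices lie within \emph{proper hop distance} $\rho$ of $v$, where proper hop distance counts steps only along edges still negative under the current $\pot$. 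This refinement---measuring progress by proper hop distance rather than by the raw negative-edge count of \cite{Fineman24}---is what sharpens the exponent, since proper hops can only disappear, never appear, as $\pot$ improves. A Chernoff bound makes the proper-$\rho$-ball of every heavy vertex contain a point of $R$ with high probability; splitting a shortest $s$-to-$v$ walk with $\eta$ negative edges at its $(\eta/2)$-th negative edge, the suffix---if it enters a heavy region---passes through a hub, and running a bounded Bellman--Ford forward and backward from each $r \in R$ folds those hub-routed contributions into $\pot$. The vertices whose suffixes stay \emph{light} have all their proper-$\rho$-balls of size below $\kappa$, so the induced instance decomposes into pieces of bounded size and is handled directly (or recursively) within budget.

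The running time is the number of rounds times the per-round cost, and the per-round cost is a sum of a hub term---dominated by $\lvert R\rvert$ Dijkstra-type computations on the $\Theta(n)$-vertex graph $G_{\pot}$ at $\bigO{m + n\log n}$ each---and a residual term from the light, small-ball part, which grows with $\kappa$ while the hub term grows with $\lvert R\rvert \approx n/\kappa$. Balancing these against each other, and optimizing jointly over the sample size, the proper-hop radius $\rho$, and the number of rounds, drives the sample size to $\lvert R\rvert = \apxO{n^{4/5}}$ and yields the main term $\bigO{m n^{4/5}\log^{2/5}n}$ together with the additive $\bigO{n^{9/5}\log^{7/5}n}$, the latter being the cost of $\apxO{n^{4/5}}$ Dijkstra runs at $\apxO{n}$ apiece. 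Feasibility of $\pot$ and correctness of the negative-cycle verdict survive throughout because a reachable negative cycle forces $\hopd{\eta}(s,v)$ to decrease strictly without bound in $\eta$, which the final bounded Bellman--Ford---and each reduction round---detects.

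The step I expect to be the main obstacle is the light case: showing that when every proper-$\rho$-ball has fewer than $\kappa$ vertices the residual instance genuinely separates into independent subproblems whose sizes sum to $\bigO{n}$, so that the recursion does not blow up; and, more delicately, that \emph{proper} hop distance (not merely raw hop count) really does shrink by the promised factor each round while $\pot$ stays feasible and no negative cycle slips through. Controlling the interaction of the $\apxO{\log n}$ rounds so that each pays only its share---with the $\log$-exponents landing at $2/5$ and $7/5$---is where the real care goes; the heavy-vertex shortcutting via the random hub set is, by comparison, a routine hitting-set argument once the right notion of heaviness is fixed.
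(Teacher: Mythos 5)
Your proposal takes a genuinely different route from the paper, and it has a gap at its core. The paper stays entirely inside Fineman's framework: it repeatedly neutralizes batches of negative edges, where each iteration (i) runs betweenness reduction at hop radius $h$, (ii) samples negative vertices at rate $\sqrt{h/k}$, and (iii) uses \emph{proper $h$-hop distance} (walks with exactly $h$ negative edges, all distinct) to argue that the sample either behaves like an independent set---neutralizable by Johnson's technique with $h$-hop distances---or exposes a weak $h$-hop negative sandwich of size $\Omega(\sqrt{kh})$, which is converted to an $h$-remote set and neutralized. The exponent $4/5$ falls out of balancing the betweenness-reduction cost $h^2$ against the remoteness/iteration cost $\sqrt{k/h}$, forcing $h = k^{1/5}$. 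Your proposal instead attempts a hop-halving scheme: maintain a potential $\pot$ that is ``$\eta$-valid'' and halve $\eta$ each round via a random hub set and a heavy/light decomposition. That is a different framework (closer in spirit to the integer-weight scaling algorithms), and its central step is asserted rather than proved: you never explain how the bounded Bellman--Ford computations from the hubs in $R$ combine into a \emph{single feasible} potential under which every vertex's shortest walk needs only $\eta/2$ negative hops. Hitting every heavy proper-$\rho$-ball with a hub gives you distances \emph{through} hubs, but stitching $|R|$ such distance profiles into one valid price function that simultaneously halves hop counts for all vertices---without introducing new negative edges---is essentially the whole difficulty for real weights, and no known technique does it. You yourself flag the light case (that small proper-$\rho$-balls decompose the residual instance into independent subproblems) as unresolved, and the parameter balance that is supposed to produce $|R| = \apxO{n^{4/5}}$ is never derived: there is no accounting of what the ``residual term'' costs as a function of $\kappa$ and $\rho$, so the claimed exponents $2/5$ and $7/5$ on the logarithms are unsupported.

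A secondary but important divergence: your notion of ``proper hop distance'' (counting only edges still negative under the current $\pot$) is not the paper's. The paper's proper hop distance requires \emph{exactly} $h$ negative edges with \emph{distinct} negative vertices, and its entire role is probabilistic---a weak pair $(s,t)$ can only survive the sampling if at least $h-1$ of the at most $\sqrt{kh}$ vertices of $B_{s,t}$ are sampled, which Chernoff rules out. That is what permits the aggressive sampling rate and hence the improved exponent. Your version of properness does not interact with any sampling argument in your write-up, so the mechanism by which ``proper'' sharpens the exponent is missing. To salvage the proposal you would need either to prove the hop-halving lemma for real weights (a major open-ended claim) or to return to the neutralize-in-batches framework the paper actually uses.
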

The algorithm follows the same general framework laid out by
\cite{Fineman24}, which eliminates negative edges incrementally along
randomly sampled ``independent sets'' or ``sandwiches'' of negative
edges. An interesting idea that we found useful was to focus on
``proper'' walks, where all the negative edges are required to be
distinct, and on ``proper hop distances'', which consider proper walks
with an exact number of negative edges (or ``hops'').  Morally, it is
much harder for a random sample of negative edges to preserve a
negative proper hop distance between two vertices, especially when the
number of hops is large. This fact allows us to aggressively sample
larger ``sandwiches'' for the same amount of effort. We explain in
greater detail below.

\section{Preliminaries}

\labelsection{preliminaries}

Let $G = (V,E)$ be a directed graph with $m$ edges, $n$ vertices, and
real-valued edge lengths $\len: E \to \reals$. We let $k$ denote the
number of negative edges in $G$. The distance from $u$ to $v$ is
defined as the infimum length over all walks from $u$ to $v$.  We let
$d(u,v)$ denote the distance from $u$ to $v$ \wrt $\len$. More
generally, for vertex sets $S$ and $T$, we let $d(S,T)$ denote the
minimum distance $d(s,t)$ over all $s \in S$ and $t \in T$.

\subsection{Vertex potentials}
Given vertex potentials $\p: V \to \reals$, let
$\lenp{e} \defeq \len{e} + \p{u} - \p{v}$ for an edge $e =
(u,v)$. \citet{Johnson77} observed that the potentials
$\p{v} = d(V,v)$ give $\lenp{e} \geq 0$ for all $e$ (assuming there
are no negative-length cycles). We let $G_{\varphi}$ denote the graph
with edges reweighted by $\p$ and $\disp$ denote distances \wrt
$\lenp$.

We say that potentials $\p$ are \emph{valid} if $\lenp{e} \geq 0$ for all
$e$ with $\len{e} \geq 0$; i.e., $\p$ does not introduce any new negative
edges. Given two valid potentials $\p_1$ and $\p_2$, $\max{\p_1,\p_2}$
and $\min{\p_1,\p_2}$ are also valid potentials. If $\p_1$ is valid
for $G$, and $\p_2$ is valid for $G_{\p_1}$, then $\p_1 + \p_2$ is
valid for $G$.

We say that $\p$ \emph{neutralizes} a negative edge $e$ if
$\lenp{e} \geq 0$. This work, like other previous work since
\cite{Goldberg95}, iteratively computes and integrates valid
potentials $\p$ that neutralize more and more negative edges, until
all edges are nonnegative.

\subsection{Preprocessing and negative vertices}

By standard techniques, we may assume the maximum in-degree and
out-degree are both $\bigO{m / n}$.  We also preprocess the input
graph $G$ as follows \cite{Fineman24}.  For each vertex $v$, let
$\mu_v = \min{\len{v,x} \where (v,x) \in E}$ be the minimum length of
the edges leaving $v$. Consider the graph $G'$ where we:
\begin{algorithm}
\item Replace each vertex $v$ with two vertices $v^-,v^+$, and an arc
  $(v^-,v^+)$ with length $\mu_v$.
\item Replace each arc $(u,v)$ with the arc $(u^+,v^-)$ with length
  $\len{u,v} - \mu_u$.
\end{algorithm}
$G'$ has similar size to $G$, and distances in $G'$ give distances in
$G$. Additionally, all the negative edges in $G'$ have the form
$(v^-,v^+)$ for some vertex $v$. So there are at most $n/2$ negative
edges, and each negative edge $(u,v)$ is the unique outgoing edge of
its tail $u$.

Henceforth we assume that $G$ has at most $n/2$ negative edges, and
that each negative edge $(u,v)$ is the unique incoming edge of its
head $v$ and the unique outgoing edge of its tail $u$. We identify
each negative edge $(u,v)$ with its tail $u$, and call $u$ a
\defterm{negative vertex}. When there is no risk of confusion, we may
reference a negative edge by its negative vertex and vice-versa. For
example, we say we \emph{neutralize a negative vertex} when we mean we
neutralize its associated negative edge.

We let $N$ denote the set of negative vertices in $G$.  For a set of
negative vertices $S$, we let $G_S$ denote the subgraph obtained by
restricting the set of negative edges to those corresponding to $S$.

When computing Johnson's potentials $\p{v} = \dis{V,v}$, any
nontrivial shortest walk to $v$ might as well start with a negative
edge. Therefore $\dis{V,v} = \min{0, \dis{\negV,v}}$ for all $v$.

\subsection{Hop distance}

The number of \emph{hops} in a walk is the number of negative edges
along the walk, counted with repetition. An \defterm{$r$-hop walk} is
a walk with at most $r$ negative edges. For an integer $r$, and
vertices $s,t \in V$, we let
\begin{math}
  \hopd{r}{s,t}
\end{math}
denote the infimum length over all $r$-hop walks from $s$ to $t$.
Hop distances obey the recurrence
\begin{align*}
  \hopd{r+1}{s,t} = \min{\hopd{r}{s,t}, \min_{u \in N} \hopd{r}{s,u} +
  \hopd{1}{u,t}}. \labelthisequation{hop-distance}
\end{align*}
For fixed $s$, if $\hopd{r+1}{s,v} = \hopd{r}{s,v}$ for all
$v \in \negV$, then $\dis{s,v} = \hopd{r+1}{s,v}$ for all $v$. Also,
for $r \in \naturalnumbers$ and any set of vertices $S$, the
potentials $\p{v}_1 = \hopd{r}{S,v}$ and $\p{v}_2 = -\hopd{r}{v, S}$
are valid.

For fixed $s$, and $r \in \naturalnumbers$, one can compute the
$r$-hop distance $\hopd{r}{s,v}$ for all $v \in V$ by a hybrid of
Dijkstra's algorithm and Bellman-Ford-Moore-Shimbel in
$\bigO{r m + r n \log n}$ time \cite{DI17,BNW22}. The algorithm can be
inferred from \refequation{hop-distance}: given $\hopd{r}{s,u}$ for
all $u$, we can compute $\min_{u \in N} \hopd{r}{s,u} + \hopd{1}{u,t}$
for all $t$ with a single call to Dijkstra's algorithm over an
appropriate auxiliary graph. By maintaining parent pointers, we can
also obtain the walks supporting the $r$-hop distances.

More generally, hop distance can be defined \wrt any set of vertices
$S$ that contains $\negV$, and the observations above would still
hold. We need this generalization for the following reason. Our
algorithm, following \cite{Fineman24}, reweights the graph along
multiple valid potential functions in a single iteration. The initial
potential functions in an iteration are not intended to neutralize any
edges but rather massage the graph for subsequent steps. However, they
may incidentally neutralize some negative edges, which can decrease
the hop counts of walks and disrupt various assumptions about hop
distances. To avoid this, like \cite{Fineman24}, we freeze the set of
negative vertices at the beginning of each iteration. During an
iteration, we continue to refer to a negative vertex as negative even
if its associated edge becomes nonnegative. This ensures the identity
$\hopdp{h}{u,v} = \hopd{h}{u,v} + \p{u} - \p{v}$ for any potential
$\p$ within an iteration.

\subsection{Remote edges, betweenness, and sandwiches}

We say that one vertex $u$ can \emph{negatively reach} another vertex
$v$ if there is a negative length walk from $u$ to $v$. More
specifically, we say that $u$ can negatively reach $v$ in $h$ hops if
there is a negative walk from $u$ to $v$ with at most $h$ negative
edges.

For a parameter $r \in \naturalnumbers$, and a set of negative
vertices $U \subseteq V$, $U$ is \defterm{$r$-remote} if $U$ can reach
at most $n /r$ vertices via negative $r$-hop paths. They are
important because they can be neutralized efficiently.

\begin{lemma}[{\cite[Lemma 3.3]{Fineman24}}]
  \labellemma{neutralize-remote} Let $G$ have maximum in-degree and
  out-degree $\bigO{m / n}$.  There is an algorithm that, given a set
  of $r$-remote vertices $U$, returns either a set of potentials
  neutralizing $U$ or a negative cycle in
  $\bigO{(r + \sizeof{U} / r)(m + n \log n)}$
  time.
\end{lemma}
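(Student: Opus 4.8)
The plan is to reduce the task to computing shortest-path distances from $U$ in a derived graph, and to use $r$-remoteness to localize the expensive part. Work in $G_U$, the graph obtained from $G$ by deleting every negative edge not corresponding to $U$; let $G^{\geq 0}$ denote its nonnegative part (equivalently, the nonnegative edges of $G$). A valid potential for $G_U$ that neutralizes all negative edges of $G_U$ is valid for $G$ and neutralizes $U$, and a negative cycle of $G_U$ is one of $G$. So it suffices to compute $d_{G_U}(U,v)$ for all $v$ --- i.e., to solve SSSP from the source set $U$ in $G_U$ --- or to return a negative cycle, which we do as soon as one shows up in any shortest-path parent-pointer structure below.

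First I would run $r$ rounds of the hop-distance recurrence \refequation{hop-distance} from $U$ in $G_U$ via the hybrid Dijkstra/Bellman--Ford routine, in $\bigO{rm + rn\log n}$ time. Let $R \defeq \setof{v : \hopd{r}{U,v} < 0}$; stripping nonnegative closed subwalks (or exposing a negative cycle) shows that $R$ is exactly the set of vertices reached from $U$ by a negative-length path with at most $r$ hops, so $\sizeof{R} \le n/r$ by $r$-remoteness. Each negative edge $(u,u')$ with $u \in U$ is itself a negative $1$-hop path, so every head $u'$ lies in $R$, and hence $\sizeof{U} \le \sizeof{R} \le n/r$. Set $R^+ \defeq R \cup U$; then $\sizeof{R^+} = \bigO{n/r}$, and by the $\bigO{m/n}$ degree bound only $\bigO{m/r}$ edges of $G$ are incident to $R^+$.

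The structural point is that, in $G_U$, an optimal walk from $U$ to any vertex uses at most $\sizeof{U}$ hops --- its negative vertices are distinct elements of $U$, otherwise a closed subwalk is removable or exhibits a negative cycle --- and every negative edge it traverses has its tail in $U$ and its head in $R$. After a preliminary valid reweighting chosen so that the nonnegative stretches of such walks between vertices of $R^+$ stay inside the $\bigO{m/r}$-edge region around $R^+$, I would build a reduced instance on $R^+$ carrying $U$'s negative edges together with that region's nonnegative edges, and iterate the hop-distance recurrence on it until it stabilizes --- at most $\bigO{\sizeof{U}}$ rounds, each costing $\bigO{m/r + (n/r)\log n}$ (relax $U$'s negative edges, then one localized Dijkstra), for a total of $\bigO{(\sizeof{U}/r)(m + n\log n)}$. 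This produces $d_{G_U}(U,q)$ for every $q \in R^+$. A single final Dijkstra run over $G^{\geq 0}$, from a super-source joined to each $q \in R^+$ with weight $d_{G_U}(U,q)$, then recovers $d_{G_U}(U,v)$ for all $v \in V$. Summing the three stages gives $\bigO{(r + \sizeof{U}/r)(m + n\log n)}$.

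I expect the main obstacle to be the faithfulness of the reduced instance: one must produce a preliminary valid potential under which optimal $G_U$-walks never route their nonnegative segments far outside the region around $R^+$, so that the localized iteration is simultaneously correct and cheap. This is exactly what freezing the set of negative vertices (as described above) is for --- it keeps $\hopdp{h}{u,v} = \hopd{h}{u,v} + \p{u} - \p{v}$ stable across the reweightings, so the hop structure detected in the first stage is not destroyed --- and reconciling this with the deletion of the other negative edges and with negative-cycle detection is where the real care lies.
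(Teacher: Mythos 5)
First, a point of reference: the paper does not prove this lemma at all --- it is quoted verbatim from \cite{Fineman24} --- so your attempt has to be measured against Fineman's argument rather than anything in this text. Your bookkeeping is largely sound and does match the real proof's outer structure: reducing to $G_U$, spending $\bigO{r(m+n\log n)}$ on $r$ rounds of the Bellman--Ford/Dijkstra hybrid, observing that $\sizeof{R}\le n/r$ by remoteness (after the walk-to-path cleanup), that $\sizeof{U}\le n/r$ because the heads of $U$'s edges are $1$-hop negatively reachable, that the degree bound gives $\bigO{m/r}$ edges incident to $R^+$, and that a single super-source Dijkstra over the nonnegative subgraph extends correct distances on $R^+$ to all of $V$.

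The middle phase, however, is the entire content of the lemma, and the mechanism you propose for it cannot work. Between consecutive hops, an optimal walk in $G_U$ travels from a head in $R$ to a tail in $U$ along a shortest path of the nonnegative subgraph, and that path may wander arbitrarily far from $R^+$. No valid potential can prevent this: a potential changes the length of every walk from $u$ to $v$ by the same telescoping amount $\p{u}-\p{v}$, so it preserves exactly which walks are shortest and cannot reroute these segments into the $\bigO{m/r}$-edge neighborhood of $R^+$. Consequently the reduced instance built only from edges incident to $R^+$ overestimates the inter-hop distances and is not faithful; repairing it would require the pairwise nonnegative-subgraph distances among the $\bigO{n/r}$ vertices of $R^+$, which already costs $\bigO{(n/r)(m+n\log n)}$ --- over budget. (Freezing the set of negative vertices is beside the point here: it stabilizes hop counts under reweighting, and says nothing about where nonnegative segments route.) The missing idea in \cite{Fineman24} is of a different kind: one applies the $r$-hop Johnson potential $\hopd{r}{V,\cdot}$ computed in the first phase and proves a structural claim about the reweighted graph $G_U$ --- in essence, that the $r$-hop reweighting reduces the number of hops that shortest walks still require to $\bigO{1+\sizeof{U}/r}$, so that $\bigO{\sizeof{U}/r}$ further rounds of the full $\bigO{m+n\log n}$ hybrid step finish the computation --- rather than a claim that the Dijkstra work can be geographically confined near $R^+$. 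Supplying and proving that structural claim is exactly what your write-up defers, so as it stands the proof has a genuine gap at its crux.
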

(As stated, \cite[Lemma 3.3]{Fineman24} reports the existence of a
negative cycle, but the algorithm can easily be amended to return the
negative cycle. The same comment applies to
\Reflemma{betweenness-reduction} below.)

To generate $r$-remote sets, \cite{Fineman24} introduced the notions
of betweenness and negative sandwiches.  For a parameter
$r \in \naturalnumbers$, and $s,t \in V$, the
\defterm{$r$-betweenness} is defined as the number of vertices $v$
such that $d^r(s,v) + d^r(v,t) < 0$. \cite{Fineman24} gave the following
randomized procedure to reduce the betweenness of all pairs of
vertices.

\begin{lemma}[{\cite[Lemma 3.5]{Fineman24}}]
  \labellemma{betweenness-reduction} There is an
  $\bigO{r b m \log n + r b n \log^2 n + b^2 n \log^2 n }$
  time randomized algorithm that returns either a set of valid
  potentials $\p$ or a negative cycle.  With high probability, all
  pairs $(s,t) \in V \times V$ have $r$-betweenness (at most) $n/b$
  \wrt $\ell_{\varphi}$.
\end{lemma}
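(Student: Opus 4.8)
\noindent\textit{Proof plan.} The plan is to run an iterative procedure of $T=\Theta(b\log n)$ rounds maintaining a running valid potential $\varphi$, starting from $\varphi\equiv 0$. In round $i$, working in the current graph $G_\varphi$: sample a uniformly random pivot $v_i\in V$; using the hybrid Dijkstra/Bellman--Ford--Moore--Shimbel routine (once from $v_i$ and once into $v_i$, on the reverse graph) compute $d^{r}_\varphi(v_i,x)$ and $d^{r}_\varphi(x,v_i)$ for all $x$, in $O(rm+rn\log n)$ time; if any of these computations exposes a negative closed walk, extract and return a negative cycle; otherwise compose $\varphi$ with a valid potential $u_i$ built from these distances (below), spending $O(bn\log n)$ per round to fold the $\Theta(b\log n)$ stored distance functions into the updated potential. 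After all rounds, return $\varphi$. The running time is then $O(rbm\log n+rbn\log^2 n+b^2 n\log^2 n)$, as claimed.

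\smallskip

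\noindent The analysis rests on the identity $d^{h}_\psi(x,y)=d^{h}(x,y)+\psi(x)-\psi(y)$ for every valid $\psi$ (in the frozen-negative sense), i.e.\ reweighting shifts $r$-hop distances boundary-to-boundary without changing hop counts. Hence for a fixed ordered pair $(s,t)$, reweighting by $\psi$ shifts every term $d^{r}(s,w)+d^{r}(w,t)$, $w\in V$, by the single common amount $\psi(s)-\psi(t)$, so --- up to this shift --- the $r$-betweenness of $(s,t)$ is governed by one threshold, and to make it at most $n/b$ it suffices to force $\varphi(t)-\varphi(s)$ below the $\bigl(\lceil n/b\rceil+1\bigr)$-st smallest value of $d^{r}(s,w)+d^{r}(w,t)$ over $w$. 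The per-round update exploits this. Take $u=d^{r}_\varphi(v,\cdot)$, which is a valid potential for $G_\varphi$ by the preliminaries (capped at a large finite constant on vertices not $r$-hop-reachable from $v$ so it stays finite); using $d^{r}_\varphi(v,s)+d^{r}_\varphi(s,v)\ge 0$ (true unless there is a negative cycle, which we would have detected) one computes $d^{r}_{\varphi+u}(s,v)+d^{r}_{\varphi+u}(v,t)=d^{r}_\varphi(s,v)+d^{r}_\varphi(v,s)\ge 0$ for \emph{every} pair $(s,t)$ --- after the reweighting the pivot $v$ sits in no sandwich at all. Moreover, for a pair $(s,t)$ whose sandwich did contain $v$, the new threshold drops to at most the betweenness term of $v$, so every witness whose term is at least that of $v$ is also evicted. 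Since a uniform sample of $V$, conditioned on landing in a given sandwich, is uniform over that sandwich, this removes $v$ together with a uniformly random ``upper portion'', i.e.\ a constant fraction of the sandwich in expectation.

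\smallskip

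\noindent For the stated bound, fix $(s,t)$ and condition on its $n$ betweenness terms. As long as its current betweenness exceeds $n/b$, a uniform pivot lands in its current sandwich with probability at least $1/b$, and each such ``hit'' shrinks the sandwich by a constant factor in expectation; so $O(\log b)$ hits bring it below $n/b$. Over $T=\Theta(b\log n)$ rounds the number of hits for $(s,t)$ is $\Omega(\log n)\gg\log b$ except with probability $n^{-\Omega(1)}$ by a Chernoff bound, and a union bound over the at most $n^2$ ordered pairs gives the claim with high probability.

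\smallskip

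\noindent The main obstacle --- which the sketch above glosses over --- is the interaction \emph{between} rounds: one round's update lowers some pairs' thresholds but can raise others', so a pivot evicted from a sandwich in an early round may be readmitted by a later update, and the per-pair ``constant-factor shrink per hit'' argument is not literally valid as stated. Making it rigorous requires either symmetrizing the update --- combining the forward potential $d^{r}_\varphi(v,\cdot)$ with the backward potential $-d^{r}_\varphi(\cdot,v)$, e.g.\ by a pointwise maximum, so that evicted pivots provably stay evicted --- or replacing the per-pair bound by a global/amortized potential (for instance a capped count of pairs for which a given $w$ is still a ``heavy'' through-vertex, or a $\sum_{(s,t)}\log(\text{sandwich size})$ quantity) that decreases by a constant fraction of its excess each round in expectation. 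Getting this monotonicity right --- together with the routine but necessary details of capping infinite distances and of turning a violated $d^{r}_\varphi(v,s)+d^{r}_\varphi(s,v)\ge 0$ inequality into an explicit negative cycle --- is where the substance lies; the rest is the linearity identity plus Chernoff and union bounds.
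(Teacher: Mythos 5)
The paper never proves this lemma; it is imported verbatim from \cite{Fineman24} (lemma 3.5 there), so your proposal must be measured against Fineman's argument. Your skeleton matches it at the top level: $\Theta(b\log n)$ uniformly random pivots, two $r$-hop distance computations per pivot costing $O(rm+rn\log n)$, a ``random pivot evicts its upper portion of each sandwich'' step, and a Chernoff bound plus a union bound over the $n^2$ ordered pairs; the runtime accounting also checks out. Your single-threshold reformulation --- that the $r$-betweenness of $(s,t)$ under $\varphi$ is the number of $w$ with $d^r(s,w)+d^r(w,t)<\varphi(t)-\varphi(s)$, so the whole procedure is about driving one scalar per pair below a fixed order statistic --- is correct and a clean way to see what is going on.

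But the gap you flag in your last paragraph is not a detail to be deferred; it is the entire content of the lemma, and with your concrete update it is fatal. Take your update $u=d^r_\varphi(v_i,\cdot)$ and apply your own boundary-shift identity: the new threshold is $\varphi'(t)-\varphi'(s)=\varphi(t)-\varphi(s)+d^r_\varphi(v_i,t)-d^r_\varphi(v_i,s)=d^r(v_i,t)-d^r(v_i,s)$ in original (unreweighted) terms. That is, after round $i$ the threshold of every pair depends \emph{only} on the last pivot $v_i$; all progress from rounds $1,\dots,i-1$ is erased, the ``$\Omega(\log n)$ hits, each shrinking by a constant factor'' accounting degenerates to a single random pivot per pair, and the high-probability claim fails. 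So the proof as written does not go through, and the two repairs you name (symmetrizing via a pointwise max of $d^r_\varphi(v,\cdot)$ and $-d^r_\varphi(\cdot,v)$, or an amortized global potential) are left as unexecuted alternatives rather than carried out. The missing piece is exactly Fineman's single-pivot lemma: a valid potential built from both $d^r_\varphi(v,\cdot)$ and $d^r_\varphi(\cdot,v)$ for which the evictions provably \emph{persist} under later applications of the same construction, so that per-pair progress accumulates monotonically. Constructing that potential and proving the persistence is the substance of the lemma; the parts you have completed (the linearity identity under frozen negative vertices, capping infinite distances, extracting a negative cycle when $d^r_\varphi(v,s)+d^r_\varphi(s,v)<0$, Chernoff and union bounds) are correct but peripheral.
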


\cite{Fineman24} defines a \defterm{negative sandwich} as a triple
$(s,U,t)$, where $s,t \in V$, and $U \subseteq N$, such that
$\hopd{1}{s,u} < 0$ and $\hopd{1}{u,t} < 0$ for all $u \in U$.  We
generalize this as follows.  For a parameter $h$, a \defterm{weak
  $h$-hop negative sandwich} is a triple $(s,U,t)$ such that
\begin{math}
  \hopd{h}{s,u} + \hopd{h}{u,t} \leq 0
\end{math}
for all $u \in U$.  The following lemma, converting sandwiches into
remote edges when the betweenness is low, generalizes \cite[Corollary
3.9]{Fineman24} from negative sandwiches to weak multi-hop negative
sandwiches.
\begin{lemma}
  \labellemma{sandwich->remote} Let $(s,U,t)$ be a weak $h$-hop
  negative sandwich and let $(s,t)$ have $(r+h)$-betweenness
  $n/r$. Then one can compute valid potentials $\p$ making $U$
  $r$-remote in $\bigO{(r+h)(m + n \log n)}$ time.
\end{lemma}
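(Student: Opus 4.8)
The plan is to write down the potentials directly: take $\p{v} \defeq -\hopd{h}{v,t}$ for every $v \in V$. As recorded above, $-\hopd{h}{\cdot,t}$ is a valid potential (it is the potential $\p_2$ with $S = \setof{t}$), and one computes it for all $v$ simultaneously by running the hop-distance algorithm for $h$ rounds out of $t$ in the reverse graph, in $\bigO{h(m + n\log n)} \subseteq \bigO{(r+h)(m + n\log n)}$ time. So the whole task reduces to checking that $U$ is $r$-remote in $G_\p$, i.e.\ that $U$ can negatively reach at most $n/r$ vertices within $r$ hops in $G_\p$.

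The key step is to show that every vertex reached this way lies in the between-set $B \defeq \setof{v : \hopd{r+h}{s,v} + \hopd{r+h}{v,t} < 0}$, which by hypothesis has size at most $n/r$. So suppose $u \in U$ and $\hopdp{r}{u,v} < 0$. Since the set of negative vertices is frozen throughout the iteration, $\hopdp{r}{u,v} = \hopd{r}{u,v} + \p{u} - \p{v} = \hopd{r}{u,v} - \hopd{h}{u,t} + \hopd{h}{v,t}$, so the assumption says exactly $\hopd{r}{u,v} + \hopd{h}{v,t} < \hopd{h}{u,t}$. Now concatenate a walk realizing $\hopd{h}{s,u}$ with one realizing $\hopd{r}{u,v}$ to get $\hopd{r+h}{s,v} \le \hopd{h}{s,u} + \hopd{r}{u,v}$, use the trivial bound $\hopd{r+h}{v,t} \le \hopd{h}{v,t}$, add the two, and apply the weak-sandwich inequality $\hopd{h}{s,u} \le -\hopd{h}{u,t}$:
\[
  \hopd{r+h}{s,v} + \hopd{r+h}{v,t} \;\le\; \hopd{r}{u,v} + \hopd{h}{v,t} - \hopd{h}{u,t} \;<\; 0 ,
\]
so $v \in B$. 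Hence $U$ negatively reaches at most $\sizeof{B} \le n/r$ vertices within $r$ hops in $G_\p$, which is the definition of $r$-remote.

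I do not expect a real obstacle here: the only creative choice is the potential $-\hopd{h}{\cdot,t}$, which ``cancels'' the cheapest $h$-hop tail into $t$, after which the argument is a single triangle-inequality computation. The two points to be slightly careful about are (i) that the identity $\hopdp{r}{u,v} = \hopd{r}{u,v} + \p{u} - \p{v}$ relies on the negative-vertex set being frozen for the iteration, and (ii) that the quantities in the displayed chain are genuine reals: $\hopd{h}{u,t}$ and $\hopd{h}{s,u}$ are finite because $u$ lies in a weak sandwich, $\hopd{r}{u,v}$ is finite because bounded-hop distances are never $-\infty$ and $\hopdp{r}{u,v}<0$ forces it below a finite value, and $\hopd{h}{v,t}$ is finite for exactly the same reason (so any $v$ with $\p{v}=-\infty$ is simply never negatively reached from $U$).
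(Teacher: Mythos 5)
Your proof is correct, and it establishes the same containment as the paper's proof --- every vertex negatively reachable from $U$ within $r$ hops in the reweighted graph lies in the $(r+h)$-betweenness set of $(s,t)$ --- but via a different potential. The paper takes $\varphi(v) = \max\setof{d^{r+h}(s,v),\, -d^{r+h}(v,t)}$ and argues the contrapositive: for any $v$ \emph{outside} the betweenness set the max is attained by $d^{r+h}(s,v)$, while for $u \in U$ it is attained by $-d^{r+h}(u,t) \geq d^h(s,u)$, so $d^r_\varphi(u,v) \geq d^r(u,v) + d^h(s,u) - d^{r+h}(s,v) \geq 0$ by the triangle inequality. Your choice $\varphi(v) = -d^h(v,t)$ is simpler and a bit cheaper to compute ($O(h(m+n\log n))$ rather than $O((r+h)(m+n\log n))$), and your displayed chain is a valid direct argument: the reweighting identity (justified, as you note, by freezing the negative vertices), the triangle inequality $d^{r+h}(s,v) \leq d^h(s,u) + d^r(u,v)$, monotonicity in the hop budget, and the weak-sandwich inequality combine exactly as you claim. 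What the paper's max-of-two-potentials formulation buys is symmetry in $s$ and $t$ and built-in protection against $\varphi = -\infty$ at vertices that cannot reach $t$ within the hop budget (at the cost of possible $+\infty$ at vertices unreachable from $s$); your version pushes that degeneracy entirely onto the $-\infty$ side, which you correctly flag and which is harmless here since such vertices are never negatively reached from $U$ (any sandwich vertex $u$ has $d^h(u,t) < \infty$, and a finite $d^r_\varphi(u,v)$ then forces $d^h(v,t) < \infty$). Both versions leave this boundary case at the same informal level, so there is no gap relative to the paper's own standard of rigor.
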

\begin{proof}
  Without loss of generality, we assume all the vertices are reachable
  from $s$. (Given potentials making $U$ $r$-remote in the subgraph
  reachable from $s$, assign the remaining vertices the maximum of
  these potentials.)  Define
  \begin{math}
    \p{v} \defeq \max{d^{r+h}(s,v), - d^{r+h}(v,t)}
  \end{math}
  for $v \in V$. For all $u \in U$,
  \begin{align*}
    -d^{r+h}(u,t) \geq - d^h(u,t) \tago{\geq} d^h(s,u) \geq d^{r+h}(s,u),
  \end{align*}
  where \tagr is because $d^h(s,u) + d^h(u,t) \leq 0$.  Thus
  \begin{math}
    \p{u} = - d^{r+h}(u,t) \geq d^h(s,u)
  \end{math}
  for all $u \in U$.

  For $v \in V$ with $d^{r+h}(s,v) + d^{r+h}(v,t) \geq 0$, since
  $d^{r+h}(s,v) \geq - d^{r+h}(v,t)$, we have
  \begin{align*}
    \p{v} = d^{r+h}(s,v).
  \end{align*}

  Now, let $u \in U$ and $v \in V$ with $d^{r+h}(s,v) + d^{r+h}(v,t)
  \geq 0$.
  We have
  \begin{align*}
    \disp{u,v}^r                  %
    &=               %
      \dis{u,v}^r + \p{u} - \p{v}
      \geq
      \dis{u,v}^r + \dis{s,u}^h - \dis{s,v}^{r+h}
      \geq 0
  \end{align*}
  by the triangle inequality.

  Thus a vertex $v$ is negatively reachable in $r$ steps from $U$ only
  if $d^{r+h}(s,v) + d^{r+h}(v,t) < 0$. There are only $n/r$ such
  vertices, making $U$ $r$-remote.  As for the running time, it takes
  $\bigO{(r+h)(m + n \log n)}$ time to compute $\p{v}$ for all $v$.
\end{proof}

\subsection{Sketch of \cite{Fineman24}} To help motivate the
preliminaries above, and to place ensuing ideas in their proper
context, we briefly sketch the existing $\apxO{m n^{8/9}}$ randomized
time algorithm of \cite{Fineman24}.

Call a set of negative vertices $U$ \emph{independent} if
$\hopd{1}{U,x} \geq 0$ for all $x \in U$. If $U$ is
independent, then in the graph restricting the negative vertices to
$U$, we have $d(V,x) = \hopd{1}{V,x}$ for all $x$. Thus an independent
set can be neutralized by Johnson's technique in nearly linear time.

Suppose there are currently $k$ negative vertices and edges in $G$.
\cite{Fineman24} gives a randomized algorithm that produces either an
independent set or a negative sandwich of size
$\bigOmega{k^{1/3}}$. The independent set can be neutralized directly
in nearly linear time. To neutralize the negative sandwich with
\Reflemma{neutralize-remote}, we need the endpoints of the sandwich to
have low betweenness. Thus \cite{Fineman24} first spends
$\apxO{m k^{2/9}}$ time to decrease the $k^{1/9}$-betweenness to
$n/k^{1/9}$ for all pairs of vertices, before sampling the independent
set or negative sandwich.  Then, if the sample returns a negative
sandwich, the negative sandwich is neutralized in
$\apxO{m k^{2/9}}$ time. One way or another, it takes
$\apxO{m k^{2/9}}$ time to neutralize $k^{1/3}$ negative edges.  It
takes $\apxO{m n^{8/9}}$ randomized time overall to repeat these steps
until there are no negative edges remaining.

\section{Proper multi-hop walks}

Let $h \in \naturalnumbers$ be a parameter to be determined.  Consider
the following stricter notion of hops and hop distance.  We define a
\emph{proper $h$-hop walk} as a walk with exactly $h$ negative edges
and where all negative vertices are distinct.  For vertices $s,t$, let
\begin{math}
  \shopd{h}{s,t}
\end{math}
denote the infimum length over all proper $h$-hop walks from $s$ to
$t$.  We call
\begin{math}
  \shopd{h}{s,t}
\end{math}
the \defterm{proper $h$-hop distance} from $s$ to $t$.

Computing proper hop distances is tricky because negative vertices
cannot repeat. Even if there are no negative cycles, and all distances
are attained by paths, standard algorithms will still want to reuse
negative edges in order to meet the hop requirement. For small values
of $h$, single-source proper $h$-hop distances can be computed with
high probability in $\apxO{2^{\bigO{h}} m}$ randomized time using
color-coding techniques \cite{AYZ95}. Fortunately, we can sidestep
computing proper $h$-hop distances altogether, and take $h$ to be much
larger, with the following lemma.

\begin{lemma}
  \labellemma{implicit-proper-distance} There is an
  $\bigO{h \parof{m + n \log n}}$-time algorithm that, given a set of
  negative vertices $S$, returns either a negative cycle, a pair
  $s,t \in S$ with $\shopd{h}{s,t} < 0$, or the distances $d(V,t)$ in
  $G_S$ for all $t \in V$.
\end{lemma}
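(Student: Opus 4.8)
The plan is to run the hop-distance recurrence \refequation{hop-distance} inside $G_S$ from the source \emph{set} $V$ for $h$ rounds (we may assume $h \ge 1$), and then either read off the answer or carve a certificate out of the walks the recurrence produces. Explicitly, for $i = 0,1,\dots,h$ I compute $\hopd{i}{V,v}$ for every $v \in V$, one round at a time, each round being a single Dijkstra computation on the auxiliary graph described after \refequation{hop-distance}, costing $\bigO{m + n\log n}$, while maintaining parent pointers so that a walk realizing each value is available. Two facts get used repeatedly: (i) $\hopd{0}{V,v} = 0$ for all $v$ (the trivial walk at $v$), hence $\hopd{i}{V,v} \le 0$ throughout; and (ii) since $G_S$ deletes every negative edge whose tail lies outside $S$, the only negative edges of $G_S$ are those with tail in $S$, and every other edge is nonnegative. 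If $\hopd{h}{V,v} = \hopd{h-1}{V,v}$ for every $v \in S$, then, because $S$ is exactly the negative-vertex set of $G_S$, the stabilization remark after \refequation{hop-distance} gives $\dis{V,v} = \hopd{h}{V,v}$ for all $v$, and I return these distances.

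Otherwise, pick $v \in S$ with $\hopd{h}{V,v} < \hopd{h-1}{V,v}$ (one exists since stabilization failed and $\hopd{h}{V,v} \le \hopd{h-1}{V,v}$ always), and let $W$ be the walk from $V$ to $v$ realizing $\hopd{h}{V,v}$, recovered from the parent pointers. Since $\len{W} = \hopd{h}{V,v} < \hopd{h-1}{V,v}$, the walk $W$ is not an $(h-1)$-hop walk, so it uses exactly $h$ negative edges counted with multiplicity. Next I would argue that, unless a negative cycle can be produced, no negative vertex is repeated along $W$: if a negative vertex $u$ occurs twice, the closed sub-walk $C$ of $W$ between two of its occurrences contains at least one negative edge (the unique out-edge of $u$, which immediately follows the first occurrence); if $\len{C} < 0$ I extract and return a negative cycle from $C$, and if $\len{C} \ge 0$ then deleting $C$ from $W$ leaves a walk from $V$ to $v$ of length at most $\len{W}$ with at most $h-1$ negative edges, contradicting $\hopd{h-1}{V,v} > \len{W}$. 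Hence in the surviving case $W$ passes through $h$ distinct negative vertices $u_1,\dots,u_h \in S$ in that order, and its $h$ negative edges are precisely their (unique) out-edges.

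Finally, let $W'$ be the suffix of $W$ beginning at the first occurrence of $u_1$ (which is $u_1$'s only occurrence, since we are in the no-repeat case). Then $W'$ runs from $u_1 \in S$ to $v \in S$ and carries all $h$ negative edges of $W$. Any negative vertex that $W'$ visits other than its endpoint $v$ must leave along its unique out-edge, namely its own negative edge, and hence equals one of $u_1,\dots,u_h$; moreover $v \notin \setof{u_1,\dots,u_h}$, since otherwise some $u_j$ would be visited both as a negative-edge tail and as the terminal vertex, a repeat. So the negative vertices along $W'$ are distinct and $W'$ is a proper $h$-hop walk. Its length is $\len{W}$ minus the length of the prefix of $W$ up to $u_1$; that prefix uses no negative edge and is therefore nonnegative, so $\len{W'} \le \len{W} < 0$, whence $\shopd{h}{u_1,v} \le \len{W'} < 0$, and I return the pair $(u_1,v)$.

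The running time is $\bigO{h\parof{m + n\log n}}$ for the $h$ Dijkstra rounds, plus $\bigO{hn}$ for scanning and performing surgery on $W$, which has $\bigO{hn}$ edges --- within the claimed bound. The one point I expect to need care is the choice of where to cut $W$: to obtain a \emph{negative} proper walk the suffix must begin at the \emph{first} negative vertex, so that the discarded prefix is negative-edge-free and hence nonnegative; and it is the preprocessing invariant that each negative vertex's negative edge is its unique outgoing edge that pins down exactly which negative vertices the suffix can traverse, which is what makes $W'$ a \emph{proper} $h$-hop walk between two vertices of $S$.
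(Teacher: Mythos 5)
Your proof is correct and follows essentially the same route as the paper's: run the hybrid Dijkstra/Bellman--Ford recurrence for $\bigO{h}$ rounds in $G_S$, test stabilization on $S$, and otherwise dissect the witnessing walk into either a negative cycle (from a repeated negative vertex) or a negative proper $h$-hop walk between two vertices of $S$. The only differences are cosmetic --- you source from $V$ rather than $S$, compare rounds $h$ and $h-1$ rather than $h+1$ and $h$, and trim a nonnegative negative-edge-free prefix to certify negativity where the paper instead drops the final hop --- and both variants are valid.
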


\begin{proof}
  Consider the subgraph $G_S$. We first compute $\hopd{h-1}{S,t}$ and
  $\hopd{h}{S,t}$ (and the supporting walks) for all $t\in V$. If
  $\hopd{h}{S,t}=\hopd{h-1}{S,t}$ for all $t\in S$, then
  $\dis{V,t}=\min{0,\dis{S,t}}=\min{0,\hopd{h}{S,t}}$ for all $t\in V$.

  Otherwise, $\hopd{h}{S,t} < \hopd{h-1}{S,t} \leq 0$ for some
  $t\in S$. Consider the $h$-hop walk attaining $\hopd{h}{S,t}$. We
  know this walk has exactly $h$ negative edges because it has length
  less than $\hopd{h-1}{S,t}$.  If some negative vertex repeats, then
  the cycle along the walk between occurrences must be a negative
  cycle, since otherwise removing the cycle would give an $(h-1)$-hop
  walk to $t$ with length less than $\hopd{h-1}{S,t}$.  If the
  negative vertices are distinct, then we have a proper $h$-hop walk
  to $t$ with negative length, as desired.
\end{proof}

\section{Sampling weak negative sandwiches}
\labelsection{sampling}

Recall that weak negative sandwiches can be efficiently neutralized by
the techniques in \Refsection{preliminaries}. Similar to
\cite{Fineman24}, we try to find large sandwiches quickly by random
sampling. We leverage proper $h$-hop distances to increase the size of
the sandwich by a factor of $\Omega(\sqrt{h})$.

\begin{lemma}
  \labellemma{sampling} For $h = \bigOmega{\log n}$, there is a
  randomized algorithm that, in $\bigO{h \parof{m + n \log n}}$
  randomized time, returns either a negative cycle, a weak $h$-hop
  sandwich $(s,U,t)$, or a set of negative vertices $S$ and the
  distances $d(V,v)$ for all vertices $v$ in the subgraph $G_S$. With
  high probability, we have
  $\sizeof{U}, \sizeof{S} \geq \bigOmega{\sqrt{h k}}$ (when they are
  returned by the algorithm).
\end{lemma}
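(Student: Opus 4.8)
The goal is a sampling procedure that, with high probability, either finds a large weak $h$-hop sandwich or discovers a large set $S$ of negative vertices on which $G_S$ has no internal negative $h$-hop reachability (so \reflemma{implicit-proper-distance} applies and returns distances in $G_S$). The target size $\bigOmega{\sqrt{hk}}$ is the square root of (hop budget times number of negative vertices), which suggests a threshold argument: either many negative vertices can negatively reach many others in $h$ hops — in which case a random "center" $u$, together with the sets of vertices reaching $u$ and reached from $u$, forms a big sandwich — or negative reachability is sparse, in which case a random subset of the negative vertices is independent enough to feed to \reflemma{implicit-proper-distance}.

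**The two-case dichotomy.**
First I would define, for each negative vertex $u$, the in-set $A_u = \setof{w \in \negV : \hopd{h}{w,u} < 0}$ and out-set $B_u = \setof{w \in \negV : \hopd{h}{u,w} < 0}$. The key observation is that if $u \in A_v \cap B_v$ for negative vertices in a common "cluster", then picking $s$ inside that in-reach and $t$ inside that out-reach yields $\hopd{h}{s,u'} + \hopd{h}{u',t} < 0$ for all $u'$ in a large intermediate set. More carefully: call $u$ \emph{popular} if $\sizeof{A_u} \geq \sqrt{hk}$ and $\sizeof{B_u} \geq \sqrt{hk}$ (roughly). Case (a): some constant fraction of negative vertices are popular. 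Then sample $u$ uniformly from $\negV$; with constant probability $u$ is popular, and then we can take $s$ to be (a vertex reached by a $1$-hop negative walk from some $w \in A_u$, chained appropriately) — but the cleaner route is to note that $u \in A_w$ means there is a negative $h$-hop walk $w \rightsquigarrow u$, so setting $s = $ the \emph{tail-extended} structure is awkward; instead observe that for a popular $u$ we directly get a weak $h$-hop sandwich $(s', U', t')$ by choosing any $w_{\mathrm{in}} \in A_u$ and $w_{\mathrm{out}} \in B_u$ — no, we need a \emph{single} $s$ and $t$. The correct move: since $\hopd{h}{w,u} < 0$ for all $w \in A_u$, we do not get a common source. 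So refine: declare $u$ popular if, for $s$-role, there is a single vertex $s$ with $\hopd{h}{s,\cdot}$ negative on $\geq \sqrt{hk}$ negative vertices and likewise a single $t$. This is exactly where the $h$ hops buy power: a walk with $h$ hops from $s$ "fans out" and a random negative vertex $s$ (the head of a random negative edge) can $h$-hop-reach many negative vertices once $h$ is large. I would make this precise by a potential/averaging argument over the $h$-hop reachability digraph, using $h = \bigOmega{\log n}$ to guarantee that repeated "doubling" of reachable sets saturates.

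**The sparse case and the recursion.**
Case (b): fewer than a constant fraction of negative vertices are popular, i.e., most negative vertices $h$-hop-negatively-reach fewer than $\sqrt{hk}$ other negative vertices. Then sample a uniformly random subset $S \subseteq \negV$ of size $\Theta(\sqrt{hk})$ — no, that's too big; sample $S$ of size $\Theta(k/\sqrt{hk}) = \Theta(\sqrt{k/h})$... this doesn't hit the target either. The right size: we want $\sizeof{S} = \bigOmega{\sqrt{hk}}$ with no internal negative $h$-hop pair. If each negative vertex reaches $\leq n/r$ others, a random $S$ of size $t$ has expected $\binom{t}{2} \cdot (\text{reach density})$ bad pairs; choosing $t$ so this is $o(t)$ lets us delete one endpoint per bad pair and retain $\bigOmega{t}$. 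Matching $t \approx \sqrt{hk}$ forces the density threshold $\sqrt{hk}/k = \sqrt{h/k}$, which is exactly the popularity cutoff divided by $k$ — consistent. So: if not-popular means reach $< \sqrt{hk}$, a random $S$ of size $\Theta(\sqrt{hk})$ has $< \binom{\sqrt{hk}}{2}\cdot \sqrt{hk}/k = \Theta(hk \cdot \sqrt{hk}/k)/2 = \Theta(h^{3/2}\sqrt{k}/2)$ expected bad pairs — that's larger than $\sqrt{hk}$, so this crude bound fails and I need the stronger statement that \emph{every} retained vertex, not just on average, has few bad partners, which lets me greedily extract an independent subset of size $\bigOmega{\sqrt{hk}}$ via a Turán-type bound: a digraph on $k$ vertices where out-degree (in the $h$-hop negative reachability relation) is $\leq \sqrt{hk}$ contains an independent set of size $\geq k / (\sqrt{hk}+1) = \bigOmega{\sqrt{k/h}}$ — still short of $\sqrt{hk}$ when $h$ is large. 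This tension is the crux.

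**Main obstacle.**
The hard part will be reconciling the two sizes so that both branches deliver $\bigOmega{\sqrt{hk}}$: the sandwich branch wants reachability to be \emph{dense}, the independent-set branch wants it \emph{sparse}, and the break-even point of a naive argument lands at $\sqrt{k/h}$, not $\sqrt{hk}$. The resolution must exploit proper multi-hop structure — it is not enough to look at $1$-hop reachability (as in \cite{Fineman24}, which only gets $k^{1/3}$); the point of \reflemma{implicit-proper-distance} is that we only need to forbid \emph{proper} $h$-hop negative paths within $S$, and a negative $(h+1)$-hop walk that \emph{does} repeat a vertex reveals a negative cycle. So the sampling should be: pick $S$ of size $\Theta(\sqrt{hk})$ uniformly; run \reflemma{implicit-proper-distance}; if it returns a proper-hop pair $(s,t)$ with $\shopd{h}{s,t} < 0$, that witnesses a long proper negative walk through $\geq h$ distinct negative vertices all of which lie in $S$, and I then build the sandwich from the \emph{midpoint} of that walk — its in-reach and out-reach along the walk each contain $\Theta(h)$ vertices, and because $s,t$ were a random-ish pair this blows up to $\Theta(\sqrt{hk})$ via a counting argument over which random $S$'s produce such a witness. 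I would carry out: (1) define $h$-hop negative reachability digraph $D$ on $\negV$; (2) if $D$ has a vertex of out-degree and in-degree both $\geq c\sqrt{hk}$, extract the sandwich directly and stop; (3) otherwise sample $S$ uniformly of size $\Theta(\sqrt{hk})$, apply \reflemma{implicit-proper-distance}; (4) if it returns distances, done; (5) if it returns $(s,t)$ with $\shopd{h}{s,t}<0$, recover the proper negative walk, take its middle negative vertex $u$, and argue via the low-out/in-degree assumption plus the randomness of $S$ that $u$'s in- and out-neighborhoods \emph{that also got sampled into} $S$ together with appropriate $s',t'$ give a weak $h$-hop sandwich of size $\bigOmega{\sqrt{hk}}$; (6) bound all running times by $\bigO{h(m + n\log n)}$ using the hybrid Dijkstra/Bellman-Ford of \refsection{preliminaries} and \reflemma{implicit-proper-distance}. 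Steps (2) and (5) — making the size bound come out to $\sqrt{hk}$ rather than a worse power — are where the real work lies, and where the "proper" restriction and the choice $h = \bigOmega{\log n}$ must both be used essentially.
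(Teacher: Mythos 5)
You have circled the paper's construction --- your step (3)--(5) plan of sampling $S$, running \reflemma{implicit-proper-distance}, and extracting a sandwich from a returned pair $(s,t)$ with $\shopd{h}{s,t} < 0$ is exactly right --- but the two load-bearing steps are left as placeholders or pointed in the wrong direction. First, the sandwich requires no midpoint construction and no restriction to sampled vertices: one simply returns $(s, B_{s,t}, t)$ where $B_{s,t} = \setof{u \in \negV \where \hopd{h}{s,u} + \hopd{h}{u,t} < 0}$ is computed in the \emph{full} graph, which is a weak $h$-hop sandwich by definition and costs only two more $h$-hop distance computations. Your step (5), which tries to assemble the sandwich from neighborhoods intersected with $S$, makes the size bound strictly harder for no benefit. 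Second, the ``counting argument over which random $S$'s produce such a witness,'' which you correctly identify as where the real work lies, is precisely the step you do not carry out. It is a short Chernoff-plus-union-bound argument: sample each negative vertex independently with probability $\sqrt{h/k}/2$; a proper $h$-hop negative walk from $s$ to $t$ inside $G_S$ has $h+1$ distinct negative vertices, all lying in $B_{s,t}$ and all sampled into $S$; so if $\sizeof{B_{s,t}} \leq \sqrt{hk}$ then $\sizeof{B_{s,t} \cap S}$ has mean at most $h/2$ and, since $h = \bigOmega{\log n}$, is below $h-1$ with high probability. A union bound over the $\bigO{k^2}$ pairs shows no such ``weak'' pair can ever be returned, so any returned pair has $\sizeof{B_{s,t}} > \sqrt{hk}$. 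This is the entire resolution of the tension you flag: there is no dense/sparse dichotomy to reconcile, because the ``distances'' branch of \reflemma{implicit-proper-distance} does not require $S$ to be independent --- it only requires that the algorithm fail to find a proper witness.

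Your steps (1)--(2) should be dropped entirely. Building the $h$-hop negative reachability digraph $D$ on $\negV$ requires $k$ single-source $h$-hop computations, far exceeding the $\bigO{h(m + n\log n)}$ budget; and, as you yourself observe mid-paragraph, a vertex with large in- and out-degree in $D$ does not yield a sandwich, which needs a \emph{single} source $s$ and sink $t$ with $\hopd{h}{s,u} + \hopd{h}{u,t} \leq 0$ for all $u \in U$. The popularity/averaging machinery is a dead end; the proper-walk witness plus the Chernoff argument is the whole proof.
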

\begin{proof}
  Let $q = 2\sqrt{k/h}$.  Let $S \subseteq N$ randomly sample each
  negative vertex independently with probability $1 / q$.  $S$ has at
  least $k/2q = \sqrt{k h} / 4$ negative vertices with high
  probability by the multiplicative Chernoff bound.  Applying
  \Reflemma{implicit-proper-distance} to $S$, we obtain either (a) a
  negative cycle, (b) the distances $d(V,t)$ for all vertices $t$ in
  $G_S$, or (c) a pair of $s,t \in S$ with $\shopd{h}{s,t} < 0$. The
  negative cycle and the distances are returned immediately.  In the
  third case, we return the sandwich $(s,B_{s,t},t)$, where
  \begin{math}
    B_{s,t} \defeq \setof{u \in \negV \where \hopd{h}{s,u} +
      \hopd{h}{u,t} < 0},
  \end{math}
  in $\bigO{h \parof{m + n \log n}}$ time.

  We say that an ordered pair $(s,t)$ is \emph{sampled by $S$} if
  $s \in S$, $t \in S$, and $\shopd{h}{s,t} < 0$ in $G_S$.  Call
  $(s,t)$ \emph{weak} if $\sizeof{B_{s,t}} \leq \sqrt{kh}$, and
  \emph{strong} otherwise. We claim that with high probability no weak
  pairs are sampled by $S$.

  This is where proper hop distance comes to the fore: any proper
  $h$-hop walk from $s$ to $t$ with negative length must include at
  least $h-1$ vertices from $B_{s,t} \setminus \setof{s,t}$. For fixed
  $s,t$, $\sizeof{B_{s,t} \cap S}$ is a sum of independent
  $\setof{0,1}$-random variables.
  If $(s,t)$ is weak, then this sum has mean at most $h/2$. Since $h =
  \bigOmega{\log n}$, by the multiplicative Chernoff bound,
  $\sizeof{B_{s,t} \cap S} \leq h$, hence $(s,t)$ is not sampled, with
  high probability.

  By the union bound, $S$ samples at least $\sqrt{kh}/4$ negative
  vertices and no weak pairs with high probability. Then $\sizeof{S}$
  and $\sizeof{B_{s,t}}$ (when available) both have the desired size.
\end{proof}

\section{Putting it all together}

We now present the overall algorithm and analysis, combining
ingredients from \Refsection{preliminaries} with the new sampling
techniques from \Refsection{sampling}.

\TheTheorem*
\begin{proof}
  The algorithm iteratively tries to compute valid potentials $\p$
  that neutralizes some of the remaining negative edges, until no
  negative edges remain.  We describe a single iteration.

  Suppose there are $k$ negative edges. We assume that
  $k = \bigOmega{\log^7{n}}$, since otherwise we can use Johnson's
  technique with $k$-hop distances in $\bigO{k \parof{m + n \log n}}$
  time.  Let $h = k^{1/5} / \log^{2/5} n = \bigOmega{\log n}$.  By
  \Reflemma{betweenness-reduction}, in
  $\bigO{h^2 (m \log n + n \log^2 n )}$ time, we compute either a
  negative cycle (and halt) or compute valid potentials $\p_1$ that,
  with high probability, reduce the $\bigO{h}$-betweenness of all
  pairs of vertices to $n / h$. Henceforth we work in $G_{\p_1}$.

  Next we invoke \Reflemma{sampling}. If it returns a negative cycle
  then we halt.  If it returns a set of negative vertices $S$, and all
  the Johnson potentials $\p_2(v) = d(V,v)$ in $G_S$, then we re-weight
  the graph with $\p_2$ and neutralize $S$.

  Otherwise we obtain a weak $h$-hop negative sandwich $(s,U,t)$.  If
  the betweenness-reduction succeeded, then by
  \Reflemma{sandwich->remote}, we either return a negative cycle or
  compute potentials $\p_2$ that make $U$ $h$-remote in
  $\bigO{h (m + n \log n)}$ time. Note that we can test if $U$ is
  indeed $h$-remote in the same amount of time.  If not, then the
  betweenness-reduction failed, and we restart the
  iteration. Otherwise, by \Reflemma{neutralize-remote}, we either
  find a negative cycle (and halt) or compute potentials $\p_3$
  neutralizing $U$ (in $G_{\p_1 + \p_2}$) in
  $\bigO{\parof{h + \sizeof{U} / h} \parof{m + n \log n}}$ time.

  Altogether, with high probability of success, and in
  $\bigO{\parof{h^2 \log n + \ell / h} (m + n \log n)}$ randomized
  time, we either neutralize $\ell = \bigOmega{\sqrt{kh}}$ negative
  edges, or return a negative cycle.

  Starting from at most $n$ negative vertices, it takes
  $\bigO{\sqrt{n/h} \log n}$ successful iterations to reduce the
  number of negative vertices to $\bigO{\log^7 n}$.  By the union
  bound, all the iterations succeed with high probability.

  The running time is dominated by the time spent to reduce the number
  of negative vertices by half. Thus the overall running time is
  $\bigO{\parof{n^{1/2} h^{3/2} \log n + n / h} \parof{m + n \log n}}
  = \bigO{m n^{4/5} \log^{2/5} n + n^{9/5} \log^{7/5} n}$ with high
  probability.
\end{proof}

\printbibliography[nottype=proceedings]

%%% Local Variables:
%%% mode: LaTeX
%%% TeX-master: "article"
%%% End:

\end{document}

%%% Local Variables:
%%% mode: latex
%%% TeX-master: t
%%% End: